\let\originalleft\left
\let\originalright\right
\renewcommand{\left}{\mathopen{}\mathclose\bgroup\originalleft}
\renewcommand{\right}{\aftergroup\egroup\originalright}
\newcommand{\de}[1]{\left(#1\right)}
\newcommand{\mean}[1]{\langle#1\rangle}
\newcommand{\ket}[1]{\left| #1 \right\rangle}
\newcommand{\ketbra}[2]{\left|#1\middle\rangle\middle\langle#2\right|}
\newcommand{\eg}{\textit{e.g.}\@\xspace}
\newcommand{\ie}{\textit{i.e.}\@\xspace}
\newcommand{\id}{\mathbb{1}}
\DeclareMathOperator{\tr}{tr}
\newtheorem{theorem}{Theorem}
\newtheorem{lemma}{Lemma}
\begin{document}

\title{\texorpdfstring{All noncontextuality inequalities for the $n$-cycle scenario}{All noncontextuality inequalities for the n-cycle scenario}}

\author{Mateus Araújo}
 %\email{maltusan@gmail.com}
 \thanks{These authors contributed equally to this work.}
 \affiliation{Departamento de Física, Universidade Federal de Minas Gerais,
 Caixa Postal 702, 30123-970 Belo Horizonte, MG, Brazil}
 \affiliation{Faculty of Physics, University of Vienna, Boltzmanngasse 5, 1090 Vienna, Austria}

\author{Marco Túlio Quintino}
 %\email{mtcq.mm@gmail.com}
 \thanks{These authors contributed equally to this work.}
 \affiliation{Departamento de Física, Universidade Federal de Minas Gerais,
 Caixa Postal 702, 30123-970 Belo Horizonte, MG, Brazil}
 \affiliation{Département de Physique Théorique, Université de Genève, 1211 Genève, Switzerland}

\author{Costantino Budroni}
 %\email{cbudroni@us.es}
  \affiliation{Naturwissenschaftlich-Technische Fakult\"at, Universit{\"a}t Siegen,
Walter-Flex-Straße 3, D-57068 Siegen,Germany}
 \affiliation{Departamento de F\'{\i}sica Aplicada II, Universidad de Sevilla, E-41012 Sevilla, Spain}

\author{Marcelo Terra Cunha}
 %\email{tcunha@mat.ufmg.br}
 \affiliation{Departamento de Matem\'atica, Universidade Federal de Minas Gerais,
 Caixa Postal 702, 30123-970 Belo Horizonte, MG, Brazil}

\author{Ad\'an Cabello}
 %\email{adan@us.es}
  \affiliation{Departamento de Física, Universidade Federal de Minas Gerais, Caixa Postal 702, 30123-970 Belo Horizonte, MG, Brazil}
 \affiliation{Departamento de F\'{\i}sica Aplicada II, Universidad de Sevilla, E-41012 Sevilla, Spain}

%%%%%%%%%%%%%%%%%%%%%%%%%%%%%%%%%%%%%%%%%%%%%%%%%%%%%%%%%%%%%%%%%%%

\date{\today}

%First version: April 30, 2012 (Belo Horizonte)
%Second version: May 1, 2012 (Sevilla)
%nth version: May 23, 2012 (Singapore)
%This version: Jul 26, 2012 (Belo Horizonte)
%(n+2)th version: 02.12.2012 (Vienna)
%(n+3)th version: 25.01.2013 (Beijing)
%(n+4)th version: 05.02.2013 (Wien)
%(n+5)th version: 03.04.2013 (Wien)

%%%%%%%%%%%%%%%%%%%%%%%%%%%%%%%%%%%%%%%%%%%%%%%%%%%%%%%%%%%%%%%%%%%

\begin{abstract}
The problem of separating classical from quantum correlations is in general intractable and has been solved explicitly only in few cases. In particular, known methods cannot provide general solutions for an arbitrary number of settings. We provide the complete characterization of the classical correlations and the corresponding maximal quantum violations for the case of $n\geq 4$ observables $X_0,\ldots,X_{n-1}$, where each consecutive pair $\{X_i,X_{i+1}\}$, sum modulo $n$, is jointly measurable. This generalizes both the Clauser-Horne-Shimony-Holt and the Klyachko-Can-Binicio\u{g}lu-Shumovsky scenarios, which are the simplest ones for, respectively, locality and noncontextuality. In addition, we provide explicit quantum states and settings with maximal quantum violation and minimal quantum dimension.
\end{abstract}

%%%%%%%%%%%%%%%%%%%%%%%%%%%%%%%%%%%%%%%%%%%%%%%%%%%%%%%%%%%%%%%%%%%

\pacs{03.65.Ta,03.65.Ud}
%Foundations of quantum mechanics; measurement theory
%Entanglement and quantum nonlocality
%(e.g. EPR paradox, Bell's inequalities, GHZ states, etc.)

\maketitle

%%%%%%%%%%%%%%%%%%%%%%%%%%%%%%%%%%%%%%%%%%%%%%%%%%%%%%%%%%%%%%%%%%%
% Introduction
%%%%%%%%%%%%%%%%%%%%%%%%%%%%%%%%%%%%%%%%%%%%%%%%%%%%%%%%%%%%%%%%%%%

\section{Introduction}
Quantum correlations among the results of jointly measurable observables go beyond the limits of classical correlations and provide a whole new set of resources for physics \cite{rio11}, computation \cite{janet09}, and communication \cite{gisin02,buhrman10}. Yet, surprisingly, necessary and sufficient conditions for classicality -- all the Bell/noncontextuality inequalities -- are known only for a few scenarios, the most famous being the Clauser-Horne-Shimony-Holt (CHSH) scenario \cite{chsh69}, completely characterized in \cite{fine82}, and the Klyachko-Can-Binicio\u{g}lu-Shumovsky (KCBS) scenario \cite{klyachko02,klyachko08}. In both cases, quantum correlations go beyond the classical ones \cite{bell64,klyachko08}.

	Unlike the CHSH scenario, the KCBS scenario cannot be associated with correlations among the results of measurements on different subsystems, but rather to the results of measurements on a single system \cite{cabello08,kirchmair09,amselem09,lapkiewicz11}. In this case, the existence of quantum correlations outside the classical set shows the impossibility of noncontextual hidden variable (NCHV) theories \cite{specker60,seevinck11,bell66,kochen67}. Quantum contextuality is a natural generalization of quantum nonlocality that neither privileges spacelike-separated observables (among other jointly measurable observables), nor composite systems (among other physical systems), nor entangled states (among other quantum states), and provides advantage versus classical (noncontextual) resources even in scenarios with no spacelike separation \cite{cabello11b,nagali12,amselem12}.

	Both the CHSH and KCBS scenarios can be understood as particular cases of a much larger family: The scenario of $n$ dichotomic observables $X_i$ such that the pairs $\{X_i,X_{i+1}\}$, modulo $n$, are jointly measurable. If we represent observables as nodes of a graph and link them with edges when they are jointly measurable, the resulting graph is the $n$-cycle (see Fig. \ref{Fig0}). Besides the CHSH and KCBS scenarios, \ie, $n=4,5$, also other cases have been completely characterized: the cases $n=2$ \cite{boole62,pitowsky94} and $n=3$ \cite{specker60,pitowsky89,liang11}, and a partial characterization have been given in \cite{liang11,cabello10} (for odd $n$) and in \cite{fritz11,chaves12,chaves13} (in terms of entropic inequalities, necessary but not sufficient conditions, for any $n$) and the case $n=6$ has been discussed in \cite{lapkiewicz11} in relation with the test of the KCBS inequality.

	The main difficulty in the characterization of correlations resides in the fact that the existing general approaches for obtaining classical \cite{pitowsky89} and quantum \citep{navascues07,navascues08} bounds involve the use of algorithms that must be applied to specific cases and that require an amount of resources for computation rapidly growing with the number of settings. In fact, the only known case in which a complete characterization of classical bounds and the corresponding quantum violation has been given for any number $n$ of settings is the bipartite Bell scenario in which Alice can choose between two dichotomic observables and Bob among $n$ \cite{sliwa03,collins04}.
	
	In this paper we provide the complete set of noncontextuality inequalities, \ie, necessary and sufficient conditions for noncontextuality, and the corresponding quantum violations for the $n$-cycle. Moreover, we exhibit quantum states and measurements which maximally violate the noncontextuality inequalities for each $n$ with minimum dimension of the corresponding Hilbert space.

%%%%%%%%%%%%%%%%%%%%%%%%%%%%%%%%%%%%%%%%%%%%%%%%%%%%%%%%%%%%%%%%%%%
% Fig. 0
%%%%%%%%%%%%%%%%%%%%%%%%%%%%%%%%%%%%%%%%%%%%%%%%%%%%%%%%%%%%%%%%%%%

\begin{figure}[t]
\centerline{\includegraphics[width=8.4cm]{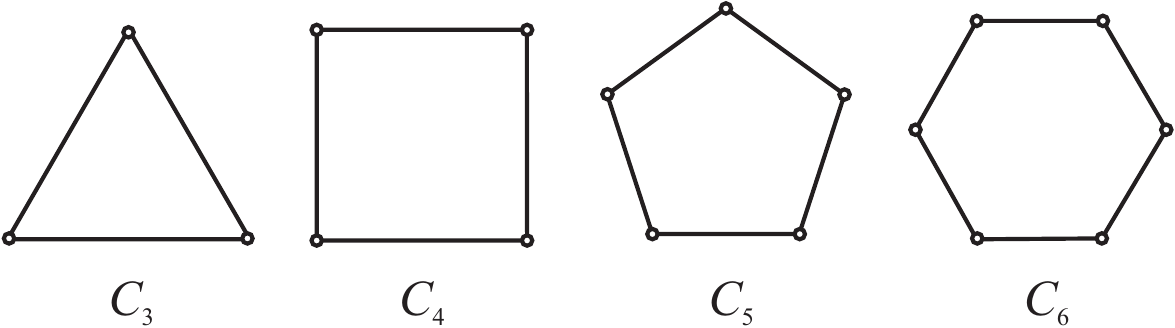}}
\caption{\label{Fig0}Graphs associated to the compatibility relations among the observables $X_i$ for $n=3,4,5,6$. $C_4$ corresponds to CHSH case with the labelling of nodes $A_1,B_1,A_2,B_2$, in the usual notation for Alice and Bob observables, and $C_5$ corresponds to KCBS case with the labelling $X_0,\ldots, X_4$.}
\end{figure}

%%%%%%%%%%%%%%%%%%%%%%%%%%%%%%%%%%%%%%%%%%%%%%%%%%%%%%%%%%%%%%%%%%%

\section{Preliminary notions} The simplest way to introduce the notion of noncontextuality is by analogy with the well-known notion of locality (\eg, \cite{abramsky11,chaves12,kleinmann12}). Here we shall follow \cite{chaves12}. The difference between the two resides in the definition of joint measurability: One no longer considers only joint measurements of spacelike-separated observables, but also admits the joint measurement of a collection of mutually compatible observables -- a \textit{context}. Consequently, the assumption of context-independence for outcomes replace the assumption of locality, \ie, independence between spacelike-separated measurements. We recall that an operational definition of compatibility can be given independently of quantum formalism, \ie, without referring to commutativity \cite{peres93,guhne10}.

	More precisely, given a set of observables $\{X_0,\ldots, X_{n-1}\}$, a context $\mathfrak{c}$ is a set of indices such that $X_i$ is compatible with $X_j$ whenever $i,j\in \mathfrak{c}$. Notice that all subsets of $\mathfrak{c}$, including one-element sets, must be admissible contexts. A contextuality scenario is therefore given by a set of observables $\{X_i\}$ together with the set of admissible contexts $\mathcal{C}=\{ \mathfrak{c}_k\}$, or simply the maximal ones. For each context one shall, then, measure joint statistics for its observables, and the set of all these statistics for some given contextuality scenario is known simply as correlations.

	Analogously with the study of locality, we shall consider here three kinds of correlations: no-disturbance, quantum, and noncontextual. The no-disturbance condition here is a simple generalization of the well-known no-signalling condition, that applies not only to observables that act on separate subsystems, but to any set of observables that are in a context \cite{ramanathan12}. As the set of no-signalling correlations, the set of correlations that respect no-disturbance is also a polytope, the \textit{no-disturbance} polytope.
	
	If our correlations respect no-disturbance and come from dichotomic observables (as will always be the case in this paper), then they can be always represented as a vector $\mathbf{v}=(v_{\mathfrak{c}} | \mathfrak{c}\in\mathcal{C})$, where $v_\mathfrak{c}$ is the expectation value of the product of the observables in context $\mathfrak{c}$ \cite{budroni10}. Deterministic noncontextual classical models assign a definite outcome $x=(x_0,\ldots,x_{n-1})\in\{-1,1\}^n$ to the observables $X_0,\ldots,X_{n-1}$, and the assignments for the correlations within each context are thus given by $v_\mathfrak{c}= \prod_{i\in \mathfrak{c}} x_i$, in a context-independent way.

	As a consequence, the set of correlations consistent with a noncontextual model is given by the convex hull of the deterministic assignments for the correlation vector $\mathbf{v}$ -- the noncontextual polytope. Tight noncontextuality inequalities are therefore affine bounds defined as the facets of the noncontextuality polytope, namely $(p-1)$-dimensional faces of a $p$-dimensional polytope. In this sense, tight inequalities are the minimal set of necessary and sufficient conditions for classicality of correlations.

	The $n$-cycle contextuality scenario is given by $n$ observables $X_0,\ldots,X_{n-1}$ and the set of maximal contexts\begin{equation}\mathcal{C}_{n} = \{\{X_0,X_1\},\ldots,\{X_{n-2},X_{n-1}\},\{X_{n-1},X_0\}\}. 
	\end{equation}
	It can be depicted as a graph where nodes represent observables and edges represent joint measurability (see Fig. \ref{Fig0}). All correlations are then given by the $2n$-dimensional vector
	\begin{equation}\label{eq:corrvec}
	(\langle X_0\rangle,\ldots,\langle X_{n-1}\rangle,\langle X_0 X_1\rangle,\ldots,\langle X_{n-1}X_0\rangle).
	\end{equation}

	The no-disturbance polytope for this scenario is easy to characterize. Since representing the correlations via expectation values already implies no-disturbance and normalization of the probabilities, the only condition left to enforce is their positivity. This condition, written in terms of elements of the vector \eqref{eq:corrvec}, gives us\small
	\begin{subequations}\label{eq:positivity}	
	\begin{align} 
	\label{eq:um}
	4p(++|X_iX_{i+1}) &= 1 + \mean{X_i} + \mean{X_{i+1}} + \mean{X_iX_{i+1}} \ge 0 \\
	\label{eq:dois}
	4p(+-|X_iX_{i+1}) &= 1 + \mean{X_i} - \mean{X_{i+1}} - \mean{X_iX_{i+1}} \ge 0 \\
	\label{eq:tres}
	4p(-+|X_iX_{i+1}) &= 1 - \mean{X_i} + \mean{X_{i+1}} - \mean{X_iX_{i+1}} \ge 0 \\
	\label{eq:quatro}
	4p(--|X_iX_{i+1}) &= 1 - \mean{X_i} - \mean{X_{i+1}} + \mean{X_iX_{i+1}} \ge 0,
	\end{align}
	\end{subequations}
	\normalsize
	which are the facets of the no-disturbance polytope. 
%which are simply the positivity conditions for the terms appearing in (\ref{eq:nodist}) written in terms of the terms appearing in (). %These are the only ones required, since normalization and no-disturbance are already implied by representing the correlations as expectation values.

\section{Main result}In the remainder of the paper, we shall always take $n \ge 3$. For the $2$-cycle the only facets of the noncontextual polytope are the four positivity conditions \eqref{eq:positivity}, \ie, the noncontextual polytope coincides with the no-disturbance polytope.

\begin{theorem}\label{th:main}
All $2^{n-1}$ tight noncontextuality inequalities for the $n$-cycle noncontextual polytope are
\begin{equation}
 \label{eq:boolenciclo}
 \Omega=\sum_{i=0}^{n-1} \gamma_i \langle X_{i} X_{i+1} \rangle
\stackrel{\text{\tiny{\textup{NCHV}}}}{\leq} n - 2,
\end{equation}
where $\gamma_i \in \{-1,1\}$ such that the number of $\gamma_i = -1$ is odd.\end{theorem}
\begin{proof}
We apply the method based on the results of \cite{budroni10} and presented in \cite{budroni12a}, namely that the existence of a classical model for a set of observables is equivalent to the existence of classical models for particular subsets coinciding on their intersection. In our proof, we use that the existence of a classical probability model for the observables  $\{X_0,\ldots,X_{n-1}\}$ is equivalent to the existence of classical models for $\{X_0,\ldots,X_{n-2}\}$ and $\{X_0,X_{n-1},X_{n-2}\}$, coinciding on their intersection $\{X_0,X_{n-2}\}$ (see Fig. \ref{Fig2}). Such a consistency condition for the intersection is written in terms of the ``unmeasurable correlation'' $\langle X_0 X_{n-2}\rangle$, \ie, a correlation between observables that are not in a context and therefore cannot be jointly measured, but have nevertheless a well-defined correlation in every classical model \cite{budroni12b}. The final set of inequalities must not contain the variable $\langle X_0 X_{n-2}\rangle$, which must be removed by applying Fourier-Motzkin (FM) elimination \cite{ziegler95}, \ie, by summing inequalities where it appears with the minus sign, with those where it appears with the plus sign. This step of the proof is a simple application of the techniques from \cite{budroni12a}. For the convenience of the reader, details are presented in Appendix \ref{app:a}.

We can now proceed by induction on $n$. The case $n=3$ is known. For the inductive step, following the above argument, we calculate the $n$-cycle inequalities by combining the $(n-1)$-cycle inequalities for the subset $\{X_0,\ldots,X_{n-2}\}$ with the $3$-cycle inequalities for $\{X_0,X_{n-1},X_{n-2}\}$. We apply FM elimination on the variable $\langle X_0 X_{n-2}\rangle$ from the whole set of inequalities. All inequalities in (\ref{eq:boolenciclo}) are obtained by combining one inequality for the $(n-1)$-cycle with one for the $3$-cycle, and are in the right number. Combining two inequalities for the $(n-1)$-cycle, or two for the $3$-cycle gives a redundant inequality, as happens for combination of positivity conditions (\ref{eq:positivity}) with inequalities of the form (\ref{eq:boolenciclo}), the latter being obtainable as a sum of $n-1$ (or $3$) positivity conditions. There are no other inequalities. The proof of their tightness is presented in Appendix \ref{app:tightness}.
\end{proof}

The reader familiar with Fine's proof for the $4$-cycle \cite{fine82}, obtained by combining two $3$-cycles, may have noticed that the above is a straightforward generalization.

We can also characterize the vertices of the no-disturbance polytope.

\begin{theorem}\label{teo:nd}
The vertices of the no-disturbance polytope are the $2^n$ noncontextual deterministic correlation vectors
\begin{equation}\label{eq:nc}
(\langle X_0\rangle,\ldots,\langle X_{n-1}\rangle,\langle
X_0\rangle\langle X_1\rangle,\ldots,\langle X_{n-1}\rangle\langle
X_0\rangle),
\end{equation}
where $\langle X_i\rangle = \pm 1$, together with the $2^{n-1}$ contextual correlation vectors of the form
\begin{equation}\label{eq:c}
(0,\ldots,0,\langle X_0 X_1\rangle,\ldots,\langle X_{n-1} X_0\rangle),
\end{equation}
where $\langle X_i X_{i+1}\rangle=\pm 1$ such that number of negative components is odd.
\end{theorem}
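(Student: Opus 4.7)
My plan is to work in the expectation-value representation \eqref{eq:representationmarginal}, where the no-disturbance polytope $P$ sits inside $\mathbb{R}^{2n}$ and is cut out by the $4n$ positivity inequalities \eqref{eq:positivity}; note that $P\subseteq[-1,1]^{2n}$, since the bounds $|\mean{X_i}|,|\mean{X_iX_{i+1}}|\le 1$ follow from adding opposite pairs of inequalities in \eqref{eq:positivity}. The proof splits into three parts: (i)~verify that the $2^n+2^{n-1}$ claimed points lie in $P$ (by direct substitution), (ii)~show each is an extreme point of $P$, and (iii)~prove no other vertices exist, through a case analysis on the local expectations $\mean{X_i}$.

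The deterministic points \eqref{eq:nc} have all coordinates equal to $\pm 1$, so they are already vertices of the cube $[-1,1]^{2n}$ and hence of $P$. For a contextual point \eqref{eq:c}, suppose $v=\tfrac{1}{2}(u+w)$ with $u,w\in P$: extremality of $\mean{X_iX_{i+1}}=\pm 1$ in $[-1,1]$ forces the correlations of $u$ and $w$ to coincide with those of $v$. At each edge $(i,i+1)$, the two inequalities in \eqref{eq:positivity} that become tight when $\mean{X_iX_{i+1}}=\pm 1$ then reduce to the linking relation $\mean{X_{i+1}}=\mean{X_iX_{i+1}}\,\mean{X_i}$ at both $u$ and $w$. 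Iterating around the cycle yields $\mean{X_0}=\mean{X_0}\prod_i\mean{X_iX_{i+1}}=-\mean{X_0}$ by the odd-parity hypothesis, so $\mean{X_i}=0$ for all $i$ in both $u$ and $w$; hence $u=w=v$.

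For (iii) let $v$ be an arbitrary vertex of $P$. \emph{Case A: some $\mean{X_{i_0}}\in\{-1,+1\}$.} Positivity at the two edges containing $X_{i_0}$ automatically makes two inequalities tight on each of these edges, fixing $\mean{X_{i_0\pm 1}X_{i_0}}=\mean{X_{i_0}}\,\mean{X_{i_0\pm 1}}$ but leaving $\mean{X_{i_0\pm 1}}$ free. A direct enumeration of the $\binom{4}{2}=6$ pairs of simultaneously tight inequalities at a single edge shows that every pair that pins an endpoint local expectation pins it to $\pm 1$; consequently, any tight-inequality configuration capable of pinning $\mean{X_{i_0+1}}$ must pin it to $\pm 1$. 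Iterating this reasoning around the cycle gives $\mean{X_j}\in\{-1,+1\}$ for every $j$, and positivity then forces $\mean{X_jX_{j+1}}=\mean{X_j}\mean{X_{j+1}}$, so $v$ is a deterministic vertex \eqref{eq:nc}. \emph{Case B: $|\mean{X_i}|<1$ for every $i$.} No tight pair of ``endpoint'' type (pinning a local expectation to $\pm 1$) can occur; since each $\mean{X_iX_{i+1}}$ can be pinned only by tight inequalities at its own edge, vertex-hood forces at each edge exactly one tight pair of ``correlation'' type, namely the pair pinning $\mean{X_iX_{i+1}}=\pm 1$ and enforcing $\mean{X_{i+1}}=\mean{X_iX_{i+1}}\,\mean{X_i}$. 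Composing these around the cycle yields $\mean{X_0}=\mean{X_0}\prod_i\mean{X_iX_{i+1}}$: odd parity uniquely determines $\mean{X_i}=0$ for every $i$, recovering \eqref{eq:c}, while even parity leaves a one-parameter family and cannot be a vertex.

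The main obstacle is the pair-by-pair classification required in Case~A: one must enumerate the six pairs of simultaneously tight inequalities at each edge, identify which pairs pin an endpoint local expectation (and verify they always pin to $\pm 1$), and ensure the cyclic propagation never stalls at a value strictly inside $(-1,1)$. Once this bookkeeping is complete, Cases~A and~B exhaust the vertex set of $P$.
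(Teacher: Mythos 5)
Your proposal is correct in substance but takes a genuinely different route from the paper. The paper argues dually: a vertex is the unique solution of $2n$ independent equations chosen among the $4n$ saturated positivity conditions, the claimed vertices arise from choosing two equations per edge (the pair \eqref{eq:um},\eqref{eq:quatro} on an odd number of edges for the contextual ones), and the exhaustiveness of the enumeration is asserted as ``straightforward to check.'' You argue primally: membership by substitution, extremality by convex decomposition (using $P\subseteq[-1,1]^{2n}$ for the deterministic points, and chasing the linking relation $\mean{X_{i+1}}=\mean{X_iX_{i+1}}\mean{X_i}$ around the odd-parity cycle for the contextual ones), and exhaustiveness by the Case A/B dichotomy on the local expectations. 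Your route is more self-contained and actually supplies much of the bookkeeping the paper omits; the paper's is shorter once one trusts the enumeration. One step of yours needs more care: in Case A, the assertion that ``any tight-inequality configuration capable of pinning $\mean{X_{i_0+1}}$ must pin it to $\pm 1$'' does not follow from the single-edge pair enumeration alone, because a local expectation can also be determined by a chain of linking relations $\mean{X_{j+1}}=\mean{X_jX_{j+1}}\mean{X_j}$ passing through several edges before terminating at a pinned value---or wrapping around the entire cycle, in which case odd parity pins every local expectation to $0$, not $\pm 1$. The conclusion survives because every pin is to $\pm 1$, every link has coefficient $\pm 1$, and the all-links, odd-parity configuration contradicts the Case A hypothesis $\mean{X_{i_0}}=\pm 1$; but this should be said explicitly rather than carried by ``consequently.'' Likewise, in Case B the statement that each edge must carry exactly one correlation-type tight pair deserves the one-line counting argument (a vertex needs at least $2n$ tight inequalities and Case B allows at most two per edge). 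With those two addenda your proof is complete, and arguably more rigorous than the published one.
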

\begin{proof}
By definition, the vertices of the polytope are given by the intersection of $2n$ independent hyperplanes, \ie, as a unique solution for a set of $2n$ independent linear equations chosen among the $4n$ equations saturating \eqref{eq:positivity}. The above vertices are obtained
 by choosing two equations among \eqref{eq:um}-\eqref{eq:quatro}, for each index $i$. In particular,  contextual vertices are obtained by choosing equations \eqref{eq:um} and \eqref{eq:quatro} for an odd number of indexes $i$ and equations \eqref{eq:dois} and \eqref{eq:tres} for the remaining indexes. It is straightforward to check that all other possible strategies for obtaining a vertex, \ie, involving the choice of one, two or three equations for each index $i$, give the same set of vertices.\end{proof}

\begin{figure}[t]
	\centering
	\includegraphics[width=0.75\columnwidth]{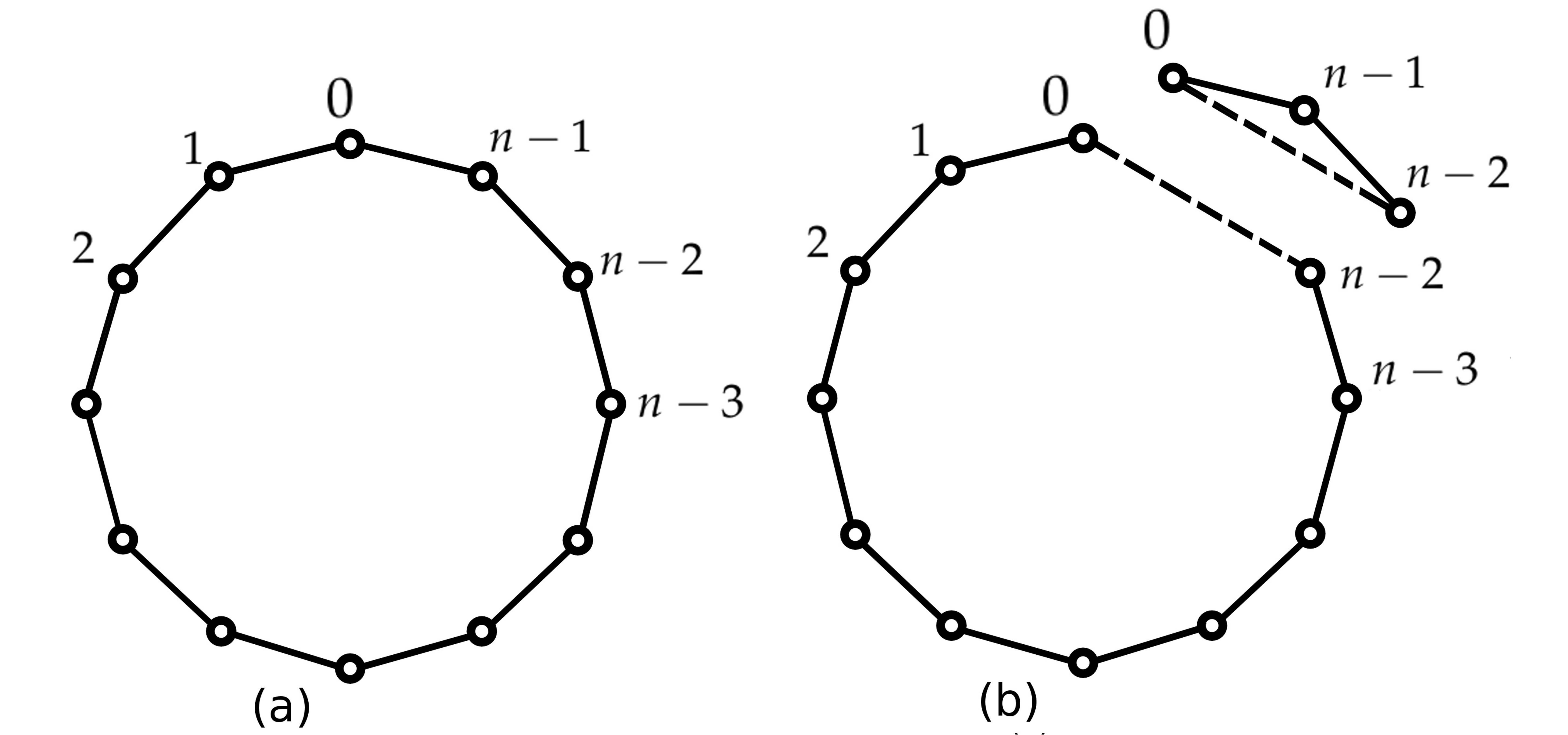}
	\caption{\label{Fig2} (a) $n$-cycle scenario. (b) Subsets of observables that can be associated with the $(n-1)$-cycle and $3$-cycle scenario by considering the ``unmeasurable correlation'' $\mean{X_0 X_{n-2}}$ (dashed line). }
	\label{fig:3cycle}
\end{figure}

	To summarize our results: The no-disturbance polytope, defined by the $4n$ positivity conditions \eqref{eq:positivity}, has $2^n + 2^{n-1}$ vertices, of which $2^n$ are noncontextual and $2^{n-1}$ are contextual. The noncontextuality polytope, defined by the $2^n$ noncontextual vertices (\ref{eq:nc}), has $4n + 2^{n-1}$ facets (it is trivial to check that inequalities (\ref{eq:positivity}) are tight for the noncontextuality polytope). Also note that for each vertex in (\ref{eq:c}) there exists an inequality in (\ref{eq:boolenciclo}) such that $\langle X_i X_{i+1}\rangle=\gamma_i$, \ie, contextual vertices and noncontextuality inequalities are in a one-to-one correspondence. 

\section{Quantum violations}Here we address the problems of whether quantum mechanics violates the inequalities \eqref{eq:boolenciclo}, which is the maximum quantum violation -- the Tsirelson bound --, and how to achieve it.

\begin{theorem}
Quantum mechanics violates the noncontextuality inequalities \eqref{eq:boolenciclo} for any $n \ge 4$. The Tsirelson bound is
\begin{equation}
 \Omega_{\rm QM}=
 \begin{cases}
 \frac{3 n \cos \left(\frac{\pi}{n}\right)-n}{1+\cos \left(\frac{\pi}{n}\right)}& \text{ for odd }n,\\
 n \cos \left(\frac{\pi}{n}\right) & \text{ for even } n.
 \end{cases}
\end{equation}
\end{theorem}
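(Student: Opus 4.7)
My plan is to prove the violation and compute $\Omega_{\rm QM}$ in three steps: reduce to a canonical inequality, give an explicit quantum construction achieving $\Omega_{\rm QM}$, and then establish a matching upper bound.

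The first step exploits the substitution $X_i\to -X_i$, which flips the signs of both $\gamma_{i-1}$ and $\gamma_i$ and therefore preserves the parity of $\#\{j:\gamma_j=-1\}$. All $2^{n-1}$ inequalities \eqref{eq:boolenciclo} thus lie in a single orbit, so it suffices to analyze one representative per parity of $n$. I would take $\gamma=(-1,\ldots,-1)$ for odd $n$ (cyclically symmetric, weight $n$) and $\gamma=(+1,-1,\ldots,-1)$ for even $n$ (weight $n-1$).

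For the explicit construction I proceed case by case. For odd $n$ I use the qutrit state $\ket{\psi}=(0,0,1)^T$ and observables $X_i=2\ketbra{v_i}{v_i}-\id$, with unit vectors $v_i\in\mathbb{R}^3$ placed on a cone around the $z$-axis at half-angle $\theta$ and azimuths $\phi_i=i\pi(n-1)/n$. Requiring $v_i\cdot v_{i+1}=0$, which enforces $[X_i,X_{i+1}]=0$, fixes $\cos^2\theta=\cos(\pi/n)/[1+\cos(\pi/n)]$; a short computation then gives $-\mean{X_iX_{i+1}}=4\cos^2\theta-1$ at every site, and summing reproduces the odd-$n$ formula. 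For even $n$, the canonical inequality is, up to relabeling, the Braunstein--Caves chained Bell inequality with $n/2$ settings per party, and the standard maximally entangled two-qubit state with Pauli observables rotated by $k\pi/n$ achieves $n\cos(\pi/n)$.

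For the matching upper bound, the even-$n$ case is the Tsirelson bound of the chained Bell inequality, obtained from the usual sum-of-squares argument on $[n\cos(\pi/n)\id-\hat\Omega]^2$. For odd $n$, where no bipartite factorization is available, I would instead construct a sum-of-squares decomposition of $\Omega_{\rm QM}\id-\hat\Omega$ over the algebra generated by $\{X_i\}$ subject to $X_i^2=\id$ and $[X_i,X_{i+1}]=0$, with coefficients dictated by the cyclic symmetry. The main obstacle is precisely this last step: without the commuting-parties tensor structure one must work directly with the partial-commutativity relations of the cycle algebra, and proving that the umbrella configuration is globally (not merely symmetrically) optimal probably requires either an explicit SDP-dual witness or a reduction to the classical problem of arranging unit vectors in $\mathbb{R}^3$ with prescribed orthogonality pattern.
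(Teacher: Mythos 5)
Your reduction to one representative inequality per parity (via $X_i\to -X_i$, which flips $\gamma_{i-1},\gamma_i$ and acts transitively on the odd-weight sign vectors) and your explicit quantum constructions are correct and essentially the same as the paper's: the paper's odd-$n$ construction is your umbrella configuration with the cone axis along the first coordinate, and indeed $-\langle X_iX_{i+1}\rangle=4\cos^2\theta-1=\bigl(3\cos(\pi/n)-1\bigr)/\bigl(1+\cos(\pi/n)\bigr)$ sums to the stated value, while the two-qubit chained construction gives $n\cos(\pi/n)$. The genuine gap is the matching upper bound, and it is not confined to odd $n$. For even $n$ you invoke ``the Tsirelson bound of the chained Bell inequality,'' but the $n$-cycle scenario is strictly weaker than the bipartite chained-Bell scenario: the Bell scenario imposes compatibility along $K_{n/2,n/2}$ (every setting of one party commutes with every setting of the other, via the tensor structure), whereas here only the $n$ adjacent relations $[X_i,X_{i+1}]=0$ are assumed. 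A bound proved using the tensor-product structure does not automatically transfer; you must verify that the sum-of-squares (or vector) argument uses only the cycle's commutation relations. The paper flags exactly this distinction in its Observations (the two marginal scenarios are non-isomorphic for $n>4$) and asserts that Wehner's proof survives the weakening, but that is something to be checked, not assumed. For odd $n$ you candidly have no proof at all.

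The paper closes both cases with one device you did not consider: write $\pm\langle X_iX_{i+1}\rangle=2\bigl[p(+\pm|X_i,X_{i+1})+p(-\mp|X_i,X_{i+1})\bigr]-1$, so that $\Omega=2\Sigma-n$ with $\Sigma$ a sum of $2n$ event probabilities. In any quantum realization, mutually exclusive events (those assigning different outcomes to a shared observable) are represented by orthogonal projectors, so $\Sigma$ is upper bounded by the Lov\'asz $\vartheta$-function of the exclusivity graph; this argument uses only $X_i^2=\id$ and adjacent commutativity, with no tensor structure whatsoever, so it treats both parities uniformly. The exclusivity graph is the prism graph $Y_n$ for odd $n$ and the M\"obius ladder $M_{2n}$ for even $n$, with $\vartheta(Y_n)=2n\cos(\pi/n)/\bigl(1+\cos(\pi/n)\bigr)$ and $\vartheta(M_{2n})=\tfrac{n}{2}\bigl(1+\cos(\pi/n)\bigr)$ (the paper quotes these values from the literature rather than deriving them), and $2\vartheta-n$ reproduces exactly the two expressions for $\Omega_{\rm QM}$. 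If you prefer to stay with sum-of-squares, you would need an explicit SOS certificate over the algebra generated by the $X_i$ subject only to the cyclic commutation relations, produced for both parities; the $\vartheta$-function route is what makes that unnecessary.
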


\begin{proof}
Without loss of generality, we can restrict our discussion to the inequalities in which, for odd $n$, $\gamma_i=-1$
for all $i$ and, for even $n$, $\gamma_i=-1$ for all $i$ except
$\gamma_{n-1}=1$. Using that
\begin{multline}
 \pm \langle X_i X_{i+1}\rangle = 2 \Big[p(+\pm|X_i,X_{i+1}) \\ +p(-\mp|X_i,X_{i+1})\Big]-1,
\end{multline}
we can rewrite $\Omega$ as $2 \Sigma - n$, where $\Sigma$ is a sum of probabilities.

Any sum of probabilities is upperbounded in quantum mechanics by the Lovász $\vartheta$-function, $\vartheta(G)$, of the graph $G$ in which nodes are the arguments of the probabilities and edges link exclusive events (\eg, ${(++|X_0,X_1)}$ and $(--|X_1,X_2)$) \cite{cabello10}.

If $n$ is odd, the graph $G$ associated to $\Sigma$ is the prism graph of order $n$, $Y_n$ (see Fig.~\ref{Fig1}). Its $\vartheta$-function is
\begin{equation}\label{eq:thetay}
 \vartheta(Y_n)= \frac{2 n \cos \de{\frac{\pi}{n}}}{1+\cos \de{\frac{\pi}{n}}},
\end{equation}
therefore, if $n$ is odd, the Tsirelson bound $\Omega_{\rm QM}$ is upperbounded by $2 \vartheta(Y_n)-n$. The following quantum state and observables saturates this bound \cite{liang11}: $\ket{\psi} =(1,0,0)$ and $X_j=2\ketbra{v_j}{v_j} -\id$, where $\ket{v_j}= \left(\cos\theta,\sin\theta\cos[j \pi (n-1)/n],\sin\theta\sin[j \pi (n-1)/n]\right)$ and $\cos^2\theta =\cos(\pi/n)/(1+\cos(\pi/n))$.

	For even $n$, the proof can be obtained simply by noting that our inequalities are closely related to the Braunstein-Caves inequalities \cite{braunstein89}, whose Tsirelson bound was found in \cite{wehner06}. A small modification of the proof in \cite{wehner06} then suffices. The following quantum state and observables saturates this bound: $\ket{\psi}=(0,1/\sqrt{2},-1/\sqrt{2},0)$ and $X_j = \tilde{X_j} \otimes \id$ for even $j$ and $X_j = \id \otimes \tilde{X_j}$ for odd $j$, where $\tilde{X_j} = \cos(j \pi/n) \sigma_x +\sin(j \pi/n) \sigma_z$ and $\sigma_x,\sigma_z$ are Pauli matrices. 

The calculations for $\vartheta(Y_n)$ and the proof for even $n$ are presented in Appendix \ref{app:b}.
\end{proof}

	It is also interesting to examine the even case with the same technique we used for the odd case. If $n$ is even, the graph $G$ associated to $\Sigma$ is the Möbius ladder of order $2n$, $M_{2n}$ (see Fig.~\ref{Fig1}). We conjecture its $\vartheta$-function to be
\begin{equation}\label{eq:thetam}
 \vartheta(M_{2n})= \frac{n}{2} \de{1+\cos\frac{\pi}{n}},
\end{equation}
for which we present evidence in Appendix \ref{app:c}.

It can also be proved that these choices of state and observables saturating the quantum bounds are optimal, in the sense that such bounds cannot be reached in a Hilbert space of lower dimension. In fact, for odd $n$ there is nothing left to prove, since according to our definition of contextuality there is no contextual behaviour in a $2$-dimensional Hilbert space. For even $n$ it can be proved (see Appendix \ref{app:d}) that in a $3$-dimensional Hilbert space
\begin{equation}
\Omega_{QM3D}^{n}=\Omega_{QM}^{n-1} + 1 \ .
\end{equation}

This fact can be used as a dimension witness \cite{brunner08b}.

%%%%%%%%%%%%%%%%%%%%%%%%%%%%%%%%%%%%%%%%%%%%%%%%%%%%%%%%%%%%%%%%%%%
% Fig. 1
%%%%%%%%%%%%%%%%%%%%%%%%%%%%%%%%%%%%%%%%%%%%%%%%%%%%%%%%%%%%%%%%%%%

\begin{figure}[t]
\centerline{\includegraphics[width=8.4cm]{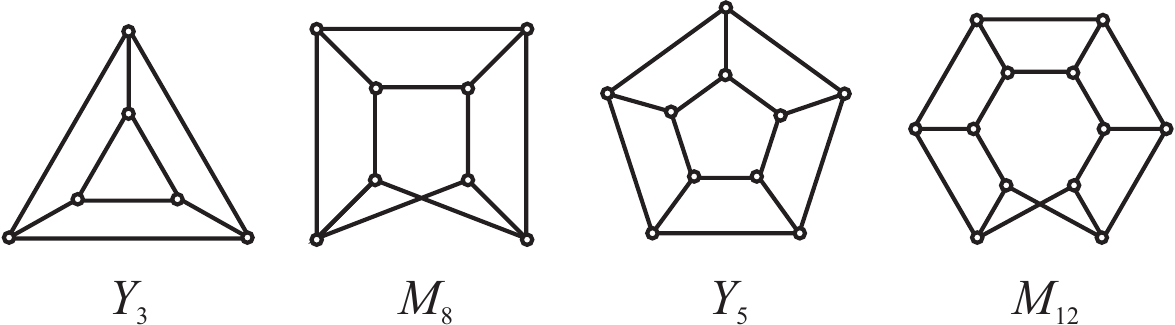}}
\caption{\label{Fig1}Graphs associated to the sum of probabilities $\Sigma$ in the tight noncontextuality inequalities for $n=3,4,5,6$.}
\end{figure}

%%%%%%%%%%%%%%%%%%%%%%%%%%%%%%%%%%%%%%%%%%%%%%%%%%%%%%%%%%%%%%%%%%%

\section{Observations}The quantum bounds for odd $n$ were found first in \cite{liang11,cabello10}, and for even $n$ in \cite{wehner06} in relation with Braunstein-Caves inequalities \cite{braunstein89}. However, we do think it is enlightening to show how graph theory provides a simple and unified approach to the problem.

Another observation is that while Braunstein-Caves inequalities are not tight Bell inequalities \cite{pitowsky01,collins04}, our inequalities \eqref{eq:boolenciclo} are tight noncontextuality inequalities. This is possible because the locality and contextuality scenarios are different: In the case of Bell inequalities, we demand every $X_i$ with even $i$ to be measurable together with every $X_j$ with odd $j$, and so the graph that represent these relations is the complete bipartite graph $K_{n/2,n/2}$, which is not isomorphic to the $n$-cycle (except for $n=4$, the CHSH case).

%%%%%%%%%%%%%%%%%%%%%%%%%%%%%%%%%%%%%%%%%%%%%%%%%%%%%%%%%%%%%%%%%%%

\section{Conclusions}The $n$-cycle contextuality scenario is the natural generalization of CHSH \citep{chsh69,fine82} and KCBS \cite{klyachko08} scenarios, the most fundamental scenarios for locality and noncontextuality, and has recently attracted increasing attention \citep{cabello10,liang11,fritz11,chaves12,chaves13}.
We have provided the complete characterization of the associated set of classical correlations for an arbitrary number $n$ of settings, the only other example of this kind being the Bell bipartite scenario with two observables for Alice and $n$ for Bob \citep{sliwa03,collins04}.
We have explicitly obtained the maximum quantum violation of all these inequalities with the minimal quantum dimension. We also completely characterized the associated no-disturbance correlations by finding the vertices of the corresponding polytope. 

%%%%%%%%%%%%%%%%%%%%%%%%%%%%%%%%%%%%%%%%%%%%%%%%%%%%%%%%%%%%%%%%%%%

\begin{acknowledgments}
The authors thank A. Acín, N. Brunner, P. Kurzyński, J. R. Portillo, R. Rabelo, S. Severini, and T. Vértesi for discussions. This work was supported by the Brazilian program Science without Borders, the Brazilian agencies Fapemig, Capes, CNPq, and INCT-IQ, the Spanish Project No.~FIS2011-29400, and EU (Marie-Curie CIG 293933/ENFOQI), the Austrian Science Fund (FWF): Y376-N16 (START prize), the BMBF (CHIST-ERA network QUASAR), the John Templeton Foundation, and the Swiss National Science Foundation.
\end{acknowledgments}

\appendix

\section{Detailed proof of Theorem 1}\label{app:a}

	Here we present the details missing in the proof of Theorem \ref{th:main}, namely the proof that the existence of a classical probability model for the observables  $\{X_0,\ldots,X_{n-1}\}$ is equivalent to the existence of classical models for $\{X_0,\ldots,X_{n-2}\}$ and $\{X_0,X_{n-1},X_{n-2}\}$, coinciding on their intersection $\{X_0,X_{n-2}\}$. 

	The first step is to extend the definition of graph representation given in Fig.\ref{Fig0} and \ref{Fig2} (graphs in Fig.\ref{Fig1} have a different interpretation).  We said that nodes represent dichotomic observables and edges represent compatibility relations, which means that if two nodes are connected by an edge the corresponding pair of observables admit a classical probability model. Such a definition can be generalized as follows
\begin{itemize}
\item[$(i)$] a node represents a subset of observables,
\item[$(ii)$] if two nodes are connected by an edge, then the corresponding subset of observables, \ie the union of the two subsets, admits a classical probability model. 
\end{itemize}

Such classical models are, in general, extension to broader subsets of the classical probability models associated  to subsets of commuting observables by QM (\ie spectral theorem). We can now recall the following result \cite{budroni10}:
\begin{lemma}\label{lemma1}
A set of probability assignments associated with a tree graph always admits a classical probability model.
\end{lemma}

Our strategy for the proof is then depicted in Fig.\ref{Fig4}: If the two subsets of observables in Fig.\ref{Fig4} (b) $(n-1)$-cycle and $3$-cycle, admit a classical representation, \ie all the corresponding inequalities are satisfied, then the set of probabilities can be extended, following $(i),(ii)$, as in Fig.\ref{Fig4} (c), \ie two classical model for $\{0,1,\ldots,n-3,n-2\}$ and for $\{0,n-1,n-2\}$  coinciding on their intersection $\{0,n-2\}$. By Lemma \ref{lemma1}, such a set already admits a classical representation.

\begin{figure}[t]
	\centering
	\includegraphics[width=1.10\columnwidth]{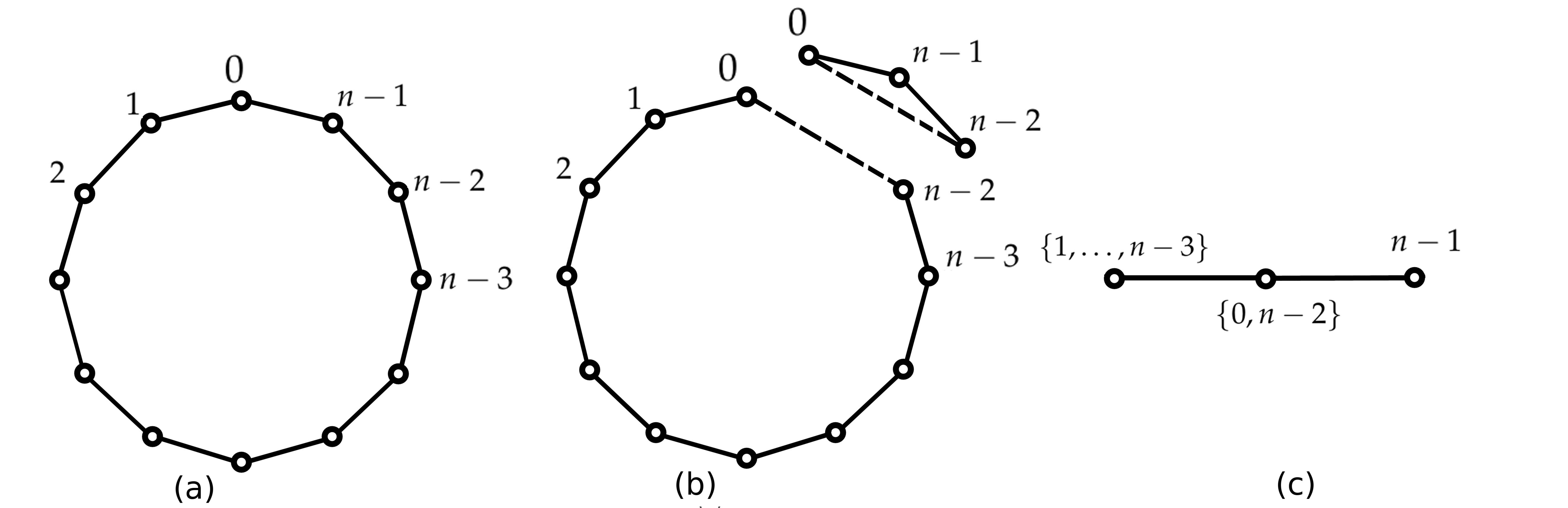}
	\caption{\label{Fig4} (a) $n$-cycle scenario. (b) subsets of observables that can be associated with the $(n-1)$-cycle and $3$-cycle scenario by considering the ``unmeasurable correlation'' $\mean{X_0 X_{n-2}}$ (dashed line). (c) Extended classical model that can be obtained if the two subset admits a classical representation coinciding on their intersection. Such a model is automatically classical as it can be depicted as a tree graph.}
\end{figure}

By the above procedure, we obtain a set of conditions that includes the ``unmeasurable correlation'' $\mean{X_0X_{n-2}}$ which plays a fundamental role since it constrains the two models on their intersection, but it is not actually measurable in the $n$-cycle scenario (see Fig.\ref{Fig4} (a)).
Such a variable must be therefore eliminated from the final set of conditions by means of Fourier-Motzkin (FM) elimination \cite{budroni12a}. We recall that FM elimination of a variable from a system of linear inequalities consists in summing each pair of inequalities where such a variable appears, respectively, with $+1$ and $-1$ coefficient (after a proper normalization of the inequalities) and keeping the inequalities where it does not appear \cite{ziegler95}. As a result, the final system of inequalities admits a solution if and only if the initial system of inequalities does.

To summarize: if a set of probability assignment for the $n$-cycle scenario satisfies the set of inequalities obtained as FM elimination of the variable $\mean{X_0 X_{n-2}}$ from the set of inequalities for $(n-1)$-cycle (for $\{0,\ldots,n-2\}$) and the $3$-cycle (for $\{0,n-2,n-1\}$), then both subsets of observables admits a classical representation with consistent assignments for $\mean{X_0 X_{n-2}}$, \ie such representations coincide on their intersection, as depict in Fig.\ref{Fig4} (c). By Lemma \ref{lemma1} these conditions are already sufficient for the existence of a classical model for the whole set of observables $\{0,1,\ldots,n-2,n-1\}$.

\section{Proof of tightness of the inequalities}\label{app:tightness}

Tightness can be proved by showing that inequalities (\ref{eq:boolenciclo}) correspond to facets of the $2n$-dimensional correlation polytope, \ie, they are saturated by $2n$ noncontextual vertices which generate an affine subspace of dimension $2n-1$. First, focus on the inequality of the odd $n$-cycle for which all $\gamma_i = -1$. It is saturated by $2n$ vertices which can be written as $(\pm v_i, w_i)$, for $i=0,\ldots,n-1$, where $w_i$ is a $n$-dimensional vector given by a cyclic permutation of the components of $w_{0}=(+1,-1,-1,\ldots,-1)$ and $v_i$ is the vector with $i$th component equal to $+1$ that satisfies relation \eqref{eq:nc}. Then it holds that $v_i+v_{i+1}=2 e_{i+1}$, where $\{e_0,\ldots,e_{n-1}\}$ is the canonical basis of $\mathbb{R}^n$, and $w_i+(1,1,\ldots,1)=2 e_i$. As a consequence, $\{ (\pm v_i,w_i)\}_{i=1,\ldots,2n}$ is a basis for $\mathbb{R}^{2n}$, showing independence. Since all the other vertices and inequalities are obtained from this one via the mapping $X_i \mapsto -X_i$, this proves the odd $n$ case. The proof for even $n$ is analogous.

\section{Detailed proof of theorem 3}\label{app:b}

An orthonormal representation (OR) for a graph ${G=(V,E)}$ is a set of unit vectors $\{ v_i \}$ associated with vertices $V=\{i\}$ such that two vectors are orthogonal if the corresponding vertices are adjacent, \ie $(i,j)\in E$. Lov\'{a}sz $\vartheta$ function is defined as the maximum, over all OR, of the norm of the operator given by sum of the unidimensional projectors associated with vectors \cite{lovasz79,lovasz09}. Notice that different vertices can be mapped onto the same vector, but then the corresponding projector appears in the sum once for each vertex associated with it.

For the prism graph $Y_n$, in general, it holds ${\vartheta(Y_n) \leq 2 \vartheta(C_n)=\frac{2 n \cos \left(\frac{\pi}{n}\right)}{1+\cos \left(\frac{\pi}{n}\right)}}$ since a graph consisting in two copies of $C_n$, let us denote it as $G$, can be obtained from $Y_n$ by removing the edges connecting vertices of the outer cycle with those of the inner cycle.

 Consider an OR for $C_n$, say $v_0,\ldots,v_{n-1}$, which gives the maximum value for the norm of the corresponding sum of projectors, i.e $\vartheta(C_n)$. Clearly, the $2n$ vectors $v_i, v'_i$, with $v'_i=v_i$, for $i=0,\ldots,n-1$, form a OR for $G$, giving $\vartheta(G)=2\vartheta(C_n)$. To show that $\vartheta(Y_n)=\vartheta(G)=2\vartheta(C_n)$, it is sufficient to notice that the above vectors are also an OR for $Y_n$. Such an OR is obtained by associating $v_i$ with the $i$th vertex of the outer cycle and the vector $v'_{i+1}$ with the $i$th vertex of the inner cycle. This completes the discussion for the case of odd $n$.

For the case of even $n$, the proof is based on positive semidefiniteness conditions analogous to those discussed in \cite{wehner06, navascues07, navascues08}. Via them we can show that Eq.~\ref{eq:thetam} is an upper bound to the Tsirelson bound, and the proof is completed by noting that we already provided quantum observables and states saturating it.

Let us consider a quantum state $\rho$ and $n$ dichotomic observables $X_0,\ldots,X_{n-1}$ with even $n$. Then the complex matrix $\Gamma_{ij}=\tr (\rho X_i X_j)$ must be positive semidefinite. In fact, given a complex vector $v$, we have
\begin{equation}\label{e:exp}
\begin{split}
v^\dagger \Gamma v &= \sum_{ij} v_i^* \Gamma_{ij} v_j = \tr(\rho\ \sum_{ij} v_i^* v_j X_i  X_j)\\
&=\tr(\rho\ \sum_{i} v_i^*X_i \sum_j v_j  X_j) = \tr(\rho O^\dagger O ) \geq 0,
\end{split}
\end{equation}
with $O \equiv \sum_i v_i  X_i $. An upper bound for the quantum violation of the expression
\begin{equation}
\sum_{i=0}^{n-1}\gamma_i \mean{X_i X_{i+1}},
\end{equation}
with $\gamma_{n-1}=-1$ and all other coefficients $+1$, can be therefore obtained as the semidefinite program (SDP)
\begin{equation}\label{e:sdp}\begin{split}
 \text{maximize:}&\quad \frac{1}{2} \tr(\beta\ \Gamma),\\
 \text{subject to:}&\quad \Gamma \succeq 0, \quad \Gamma_{ii}=1 ,
\end{split}\end{equation}
where $\beta$ is a symmetric real matrix such that ${\frac{1}{2} \tr(\beta\ \Gamma)=  \sum_{i=0}^{n-1}\gamma_i \Gamma_{i,i+1}}$. The optimality of the solution $n\cos(\frac{\pi}{n})$
for the above SDP, up to a reordering of the coordinates, has been proved by Wehner \cite{wehner06}. Together with the explicit state and observables presented in the main text, this concludes our proof.

\section{\texorpdfstring{Evidence for the conjectured Lovász $\vartheta$ function for Möbius ladder graphs}{Evidence for the conjectured Lovász function for Möbius ladder graphs}}\label{app:c}

	In Eq.~\ref{eq:thetam} we conjectured an expression for $\vartheta(M_{2n})$. The evidence we have for it is both numerical and mathematical: We calculated explicitly the value for $\vartheta(M_{2n})$ for even $n$  up to $n=64$, \ie $\vartheta(M_{128})$ and it coincides with the expression given in Eq.~\ref{eq:thetam} with very high precision. Moreover, since $M_{2n}$ is a regular graph (each vertex has the same number of neighbours) $\vartheta(M_{2n})$ can be upperbounded by the expression \cite{lovasz79}
\begin{equation}
\vartheta(M_{2n})\leq \frac{-n\lambda_{2n}}{\lambda_1 - \lambda_{2n}},
\end{equation}
where $\lambda_1\geq \lambda_2\geq \ldots \geq \lambda_{2n}$ are the eigenvalues for the adjacency matrix $A$ for $M_{2n}$. Since $A$ is a circulant matrix they can be explicitly computed \cite{gray06}, and with them we obtain
\begin{equation}\label{reg}
\vartheta(M_{2n})\leq \frac{n (2\cos(\frac{\pi}{n}) +1)}{2+ \cos(\frac{\pi}{n})}.
\end{equation}
Comparing (\ref{reg}) with our conjecture for $\vartheta(M_{2n})$ in the asymptotic limit $n\rightarrow \infty$, we obtain
\begin{equation}
\frac{n (2\cos(\frac{\pi}{n}) +1)}{2+ \cos(\frac{\pi}{n})} - \frac{n}{2} \de{1+\cos\frac{\pi}{n}} \approx \frac{\pi^2}{12n}.
\end{equation}

\section{\texorpdfstring{Quantum bounds for even $n$ in dimension $3$}{Quantum bounds for even n in dimension 3}}\label{app:d}
Consider the inequalities 
\begin{equation}
 \label{eq:boolencicloapp}
 \Omega=\sum_{i=0}^{n-1} \gamma_i \langle X_{i} X_{i+1} \rangle
\stackrel{\text{\tiny{\textup{NCHV}}}}{\leq} n - 2,
\end{equation}
where  $\gamma_i \in \{-1,1\}$ such that the 
number of $\gamma_i = -1$ is odd, and $n$ is even. 
By the symmetry of the problem, namely the fact that each inequality is 
obtained via the substitution  $X_i \rightarrow -X_i$ for some indices $i$, the quantum bound 
for (\ref{eq:boolencicloapp}) must be the same for all possible choices of $\gamma$.

Let us start with $n$ general 3-dimensional observables $X_i$ and a vector $\gamma$ giving the \textit{lhs} of Eq.~\eqref{eq:boolencicloapp}. Since we are in 3 dimensions, for each $i$, either $X_i$ or $-X_i$, is 
given by a 1-dimensional projector $P_i$, as $\pm X_i= 2P_i -1$. The substitution $X_i 
\rightarrow - X_i$ simply amounts to a new definition of the vector $\gamma$. We have therefore a 
new expression (\ref{eq:boolencicloapp}) where all the $X_i$'s are given by $1$-dimensional 
projectors.
For such observables it holds
\begin{equation}\label{eq:comm}
[X_i,X_{i+1}] = 0 \Longleftrightarrow P_i P_{i+1}=0 \text{ or } P_i=P_{i+1}.
\end{equation}
Let us assume, for the moment, that the condition $P_i P_{i+1}=0$ holds for all $i=0,\ldots,n-1$, we shall discuss the other cases later. We want to calculate the maximum of the \textit{lhs} of (\ref{eq:boolencicloapp}) over all $\gamma$, namely
\begin{equation}
 \label{eq:buni}
 \max_{\gamma, P_i,\rho} \sum_{i=0}^{n-1} \gamma_i \langle (2P_i-1)(2P_{i+1}-1) \rangle,
\end{equation}
which can be rewritten as
\begin{equation}
\begin{split}
\max_{\gamma, P_i,\rho} \sum_{i=0}^{n-1} \gamma_i [1- 2 \langle P_i + P_{i+1} \rangle]=\\
=\max_{\gamma, P_i,\rho} \frac{1}{2 }\sum_{i=0}^{n-1} [-(\gamma_i + \gamma_{i-1})]\ (4\langle P_i \rangle -1 ). \\
\end{split}
\end{equation}
Since the number of $\gamma_i=-1$ must be odd, and $n$ is even, at least two terms $(\gamma_i + \gamma_{i-1})$ and $(\gamma_{i+1} + \gamma_{i})$ must be zero. Without loss of generality, we can assume it holds for $i=n-2$. We have therefore
\begin{equation}
\begin{split}
\max_{\gamma, P_i,\rho} \frac{1}{2 }\sum_{i=0}^{n-1} [-(\gamma_i + \gamma_{i-1})]\ (4\langle P_i \rangle -1 )\leq \max_{ P_i,\rho}\  4\sum_{i=0}^{n-3} \langle P_i \rangle\\ -(n-2)= 2(n-2)-(n-2)=n-2,
\end{split}
\end{equation}
where we used that the maximum of $\sum_{i=0}^{n-3} \langle P_i \rangle$ is bounded by 
 $\frac{n-2}{2}$. In fact, $\langle P_i + P_{i+1}\rangle \leq 1$, since their sum is still a projector.

We must now consider the other possibilities given by (\ref{eq:comm}). If for a given index, say $i=0$, $P_i=P_{i+1}$, we simply have that $X_0= X_{1}$, therefore Eq.~\eqref{eq:boolencicloapp} reduces to 
\begin{equation}
\begin{split}
 \label{eq:n-1}
 \sum_{i=0}^{n-1} \gamma_i \langle X_{i} X_{i+1} \rangle = \gamma_0 + \sum_{i=1}^{n-1} \gamma_i \langle X_{i} X_{i+1}\rangle
 \leq \Omega_{QM}^{n-1} + 1 ,
\end{split}
\end{equation}
since $\langle X_0 X_{n-1}\rangle= \langle X_1 X_{n-1}\rangle$.

If for two indices, $i,j$, $P_i=P_{i+1}$ and $P_j=P_{j+1}$, the problem is reduced to the case $n-2$, and so on for all the other cases.

We have therefore proved that the \textit{optimal bound} for $n$-cycle inequalities in 3 dimension is given by 
\begin{equation}
\Omega_{QM3D}^{n}=\Omega_{QM}^{n-1} + 1 \text{ , for  even } n.
\end{equation}
Remember that $\Omega_{QM3D}^{n} \geq n-2$ since $\Omega_{QM}^{n-1} \geq n-3$ and that the bound in (\ref{eq:n-1}) can always be achieved with $1$-dimensional projectors.

%%%%%%%%%%%%%%%%%%%%%%%%%%%%%%%%%%%%%%%%%%%%%%%%%%%%%%%%%%%%%%%%%%%

\bibliographystyle{linksen}

%%%%%%%%%%%%%%%%%%%%%%%%%%%%%%%%%%%%%%%%%%%%%%%%%%%%%%%%%%%%%%%%%%%

\bibliography{biblio}
\end{document}